\documentclass[fleqn,12pt]{wlscirep}

\usepackage{graphicx}
\usepackage{subfig}
\usepackage{amsmath,mathptmx}
\usepackage{mathrsfs} 
\usepackage{amsfonts}
\usepackage{amssymb}%
\setcounter{MaxMatrixCols}{30}
\usepackage{amsthm}
\usepackage{amscd}
\usepackage{verbatim}
\usepackage{url}
 \usepackage{dsfont}
\usepackage[all]{xypic} 
\usepackage{endnotes}
\usepackage{enumerate}
\usepackage{mathtools}
\usepackage{comment}
\usepackage{wrapfig}
\usepackage{tcolorbox}
\usepackage{lipsum}
\usepackage{color}

\newcommand{\C}{\mathbb{C}}

\newcommand{\ZZ}{\mathbb{Z}}

\newcommand{\GL}{\operatorname{GL}}

\newcommand{\Mat}{\operatorname{Mat}}


\newcommand{\tn}{\textnormal}

\newcommand{\End}{\textnormal{End}}

\newcommand{\actson}{\curvearrowright}

\theoremstyle{plain}
\newtheorem{theorem}{Theorem}
\newtheorem{corollary}[theorem]{Corollary}
\newtheorem{proposition}[theorem]{Proposition}

\theoremstyle{definition}
\newtheorem{defn}{Definition}
\theoremstyle{definition}

\theoremstyle{definition}

\theoremstyle{definition}
\newtheorem{example}{Example}
\theoremstyle{definition}

\theoremstyle{definition}

\newenvironment{WrapText}[1][r]
  {\wrapfigure{#1}{0.5\textwidth}\tcolorbox}
  {\endtcolorbox\endwrapfigure}

\title{Topological classification of time-asymmetry in unitary quantum processes}

\author[1,$\dagger$]{Jacob Turner}
\author[2,*]{Jacob Biamonte}
\affil[1]{Korteweg-de Vries Institute for Mathematics\\ University of Amsterdam, 1098 XG Amsterdam, Netherlands}
\affil[2]{Skolkovo Institute of Science and Technology\\ 30 Bolshoy Boulevard, Moscow 121205, Russian Federation}

\affil[*]{j.biamonte@skoltech.ru, http://quantum.skoltech.ru}
\affil[$\dagger$]{j.w.turner@uva.nl}


\begin{abstract}
Understanding  which physical processes are symmetric with respect to time inversion is a ubiquitous problem in physics. In quantum physics, effective gauge fields allow emulation of matter under strong magnetic fields, realizing the Harper-Hofstadter, the Haldane models, demonstrating one-way waveguides and topologically protected edge states.  Central to these discoveries is the chirality induced by time-symmetry breaking.  In quantum walk algorithms, recent work has discovered implications time-reversal symmetry breaking has on the transport of quantum states which has enabled a host of new experimental implementations.  We provide a full topological classification of the Hamiltonians of operators breaking time-reversal symmetry in their induced transition probabilities between elements in a preferred site-basis.  We prove that a quantum process is necessarily time-symmetric for any choice of time-independent Hamiltonian precisely when the underlying support graph is bipartite or no Aharonov-Bohm phases 
are present in the gauge field.  We further prove that certain bipartite graphs exhibit transition probability suppression, but not broken time-reversal symmetry. Furthermore, our development of a general framework characterizes gauge potentials on combinatorial graphs. These results and techniques fill an important missing gap in understanding the role this omnipresent effect has in quantum information and computation.
\end{abstract}
\begin{document}

\flushbottom
\maketitle

\thispagestyle{empty}

\section*{Introduction}

Synthesis and emulation of artificial gauge fields for spinless particles has been realized in a number of experimental systems \cite{lim2016electrically}.  These experiments demonstrate that an artificial gauge field can be used to control an observed chirality induced by time-symmetry breaking. This in turn has profound implications for probability transfer control in quantum technologies and quantum algorithms based on quantum walks  \cite{mulken2011continuous,mulken2006coherent}. Indeed, the marriage of these two disciplines will have impacts ranging from applications to transport in topological systems, gauge theories,  topological quantum computing, topological insulators, and the fractional quantum hall effect. The present work provides a classification of any quantum processes into a topological dichotomy, found by a paradigmatic change in how time-symmetry theory is addressed in quantum theory.  Our results were found by developing a powerful framework based on geometric invariant theory.

Quantum walks \cite{kempe2003quantum,venegas2012quantum} are an established tool in quantum physics, quantum computing, and quantum information to study probability transfer. In quantum computing, they have been used to develop quantum search algorithms generalizing Grover's algorithm \cite{wong2015faster,wong2015grover,shenvi2003quantum,krovi2015quantum,childs2004spatial}, as an improvement over classical random walks and Markov processes \cite{moore2002quantum,childs2003exponential,szegedy2004quantum}, and quantum walks represent a universal model of quantum computation \cite{childs2009universal}.

Quantum walks also provide an algorithmic lens for studying quantum transport in physical systems \cite{mulken2011continuous,mulken2006coherent}; for example photosynthetic complexes \cite{mohseni2008environment,rebentrost2009environment}.  Most recently the investigation of quantum transport phenomena has been expanded to so called, `chiral quantum walks' \cite{Z13}, in which time symmetry is broken during unitary evolution ~\cite{Z13,xiang2013transfer,bedkihal2013transfer,manzano2013transfer, Wong15, cameron2013transfer}.

With respect to the probability transport between graph nodes, we fully classify time-reversal symmetry for quantum processes on graphs. While in actuality, a process of this type can describe any unitary process (after a change of basis), we choose to view them as so called `quantum walks' in this paper, leading to a different notion of time-symmetry than usually considered \cite{Z13}. In our notion of time-symmetry, measurement is implicit. It was recently argued in \cite{oreshkov2015operational} that this should be the case as post-selection is the natural analog to preparation when time is reversed.
 
An interesting example of how the notion of time-symmetry classified here is different is the fact that we find the famous Hofstadter model to be time-symmetric as it is a walk on a bipartite graph. This contrasts with the more standard interpretation where this model is not time-symmetric \cite{Hofstadter1976energy} due to eigenfunctions occurring in conjugate pairs sharing the same energy.    

The reason for this is that our main motivation is understanding transition probabilities. In the field of quantum walks, this is the most natural thing to consider as it allows analysis of quantum walks based processes along the same lines as stochastic processes. This is especially important in determining the ``speed'' at which a quantum walk converges to the desired state. 
As a result, we cannot distinguish processes that differ by complex conjugation.

The behavior of the fundamental laws of physics under time-reversal has long
remained central to the foundations of
physics~\cite{wigner59,2013PhRvL.111b0504S} and has a host of applications in condensed
matter theory~\cite{peierls1993,Hofstadter1976energy,sarma2008hall,hasan2010topological,dalibard2011artificial}.

The practical importance of time-reversal symmetry breaking 
is demonstrated by its equivalence to introducing biased probability flow in a quantum system \cite{Z13}. 
Time-symmetry breaking enables directed state transfer
without requiring a biased (or non-local) distribution in the initial states, or coupling to an
environment~\cite{Z13,plenio2008dephasing, 2012PhRvL.108b0602S, Bose07}.  The effect is however subtle:  it is readily shown that time-reversal asymmetry cannot affect the site-to-site transport in some simple cases, such as a Hamiltonian with a support graph representing either a linear chain or a tree.  
On the other side, it is easily demonstrated that the effect is present in a range of physically and practically relevant scenarios.  As such, it is of great interest to understand when such time-reversal symmetry breaking can occur.

In this paper, we do a detailed analysis of a set of fundamental symmetries preserving a preferred on-site basis and transition probabilities in the network topology. The symmetries are also very important in gauge theory, where they coincide with gauge symmetries in materials like graphene.  We recover known effects in gauge theory, with rigorous mathematical proof, such as impossibility of time-symmetry in the presence of magnetic fields and the Aharonov-Bohm effect \cite{aharonov1959significance}. 

Furthermore, our recovery of these facts uses geometric invariant theory and differs from theoretical tools used in gauge theory. As such, we have found a new framework in which these phenomena can be understood. A consequence of our approach is a complete classification of gauge potentials on combinatorial graphs\cite{harrison2011quantum}.  However, our situation differs from the classical scenario by the aforementioned reasons. Indeed, we shall see that time-symmetry can exist for walks even in the presence of a magnetic flux if the walk takes place on a bipartite graph, which is the case for the Hofstadter model.

We establish directly and fully classify the Hamiltonians which enable the breaking of time-reversal 
symmetry in terms of their transition probabilities between elements in a preferred site-basis.


\subsection*{The Probability Transfer Problem}

A continuous time quantum walk~\cite{mulken2011ctqw,PhysRevX.3.041007,childs2009universal,Perseguers2010} is a quantum operator $U(t)=e^{-itH}$ acting on a normalized vector---for quantum search the normalized all-ones vector and for transport, a state in the site-basis. The Hamiltonian $H$ is an adjacency matrix of a weighted bidirected graph: the Hermitian constraint being that $H=H^\dagger$. We call this graph the support of $H$, which we define more formally later.

The quantum evolution of $U(t)$ induces a walk on the graph determined by $H$. In the quantum walks literature, $H$ is typically considered as being symmetric, with real valued (and possibly even negative) edge weights: but it needn't be.  More precisely, an edge $e$ connecting vertices $v$ and $w$ can have conjugate weights with respect to $H$ depending if it is viewed as an edge from $v\to w$ or vice versa. Unless explicitly stated, we will assume that the support of our Hamiltonian is connected and simple.  Immediately one then asks, ``when can such a process induce time-asymmetric evolutions?''  

The question of which processes induce transition probabilities that are symmetric with respect to time must be made more precise: Consider a set $\mathcal{L}\subseteq\End(V)$, where $V$ is a complex Hilbert space with a chosen orthonormal basis $B$. We are given a function $P:B\times B\times\mathcal{L}\to\mathbb{R}_{\ge0}$. Equivalently, $\mathcal{L}\subseteq\Mat_{|B|\times|B|}(\C^{|B|})$, which is to say that $M\in\mathcal{L}$ is matrix from $\C^{|B|}$ to itself expressed in the preferred basis $B$.  Often this function is written as $P_{s\to f}(M)$ where $s$ and $f$ are basis vectors of $V$ and $M\in\mathcal{L}$. The problem is then to classify those operators $M(t)$ such that $P_{s\to f}(M(t))-P_{f\to s}(M(t))=0$ for all basis vectors $f,s\in B$ and all times $t$.

From the point of view of quantum walks,  basis vectors in $B$ are identified with vertices in a graph defined by $M(t)$, on which we consider probabilistic walks. $P_{s\to f}(M(t))$ gives the probability of moving from node $s$ to node $f$ at time $t$. We now look at a two more classical examples that use matrices to define dynamics on graphs in a similar fashion. In each of our examples, $t$ represents a time parameter that allows the walk to evolve.

\vspace{.3cm}
\noindent{\textbf{Classification of Time-Symmetric Stochastic Processes:}}\label{ex:stochastic}
\noindent Motivationally, we examine this same classification for a subset of stochastic processes. Let $S$ be a valid infinitesimal stochastic generator of a continuous-time Markov process, so $s_{ij}\geq 0$ for $i\neq j$, and $\sum_j q_{ij}=0$ for all $i$.  Then the set of matrices of the form $U(t)=e^{tS}$ is a stochastic semi-group. As a caveat, not all stochastic matrices arise this way \cite{davies2010embeddable}, but all such processes considered in this paper will be. This is because every stochastic matrix of this form is invertible, which is not the case in general, although it's inverse may not necessarily be a stochastic process as well. We define the probability function $P_{s\to f}=U(t)_{fs}=\langle f| U|s\rangle$. The \emph{stochastic probability current} is defined by $\hat{s}_{s,f}(U(t)):=U(t)_{fs}-U(t)_{sf}=0$. Then the 
following are equivalent:

\begin{itemize}
 \item $\hat{s}_{s,f}(U(t))=0$ for all basis elements $s$, $f$, and all times $t$.
 \item $S=S^T$, i.e.\ $S$ is a Dirichlet operator. 
\end{itemize}
 
\vspace{.3cm}
Requiring $U(t)$ to be doubly stochastic isn't sufficient for $\hat{s}_{s,f}(U(t))=0$, since there exist non-symmetric doubly-stochastic matrices with positive eigenvalues, which then have a real logarithm. 
Zero stochastic probability current is a strictly stronger condition than the reversibility of a Markov chain at some stationary distribution, implying time symmetry at any initial distribution.

\vspace{.3cm}
\noindent{\textbf{Classification of Amplitude Symmetry:}}\label{ex:quantamp}
\noindent Another related, but less well known example comes from quantum processes. Let $H$ be Hermitian and define $U(t)=e^{-itH}$. The \emph{quantum amplitude current} is defined to be $\hat{q}_{s,f}(U(t)):=U(t)_{fs}-U(t)_{sf}=0$. Then the following are equivalent:
 \begin{itemize}
  \item $\hat{q}_{s,f}(U(t))=0$ for all $s$, $f$, and $t$.
  \item $H=H^T$.
  \item $[H,K]=0$, where $K$ is the anti-unitary operator defined by complex conjugation in the same basis as $H$.
 \end{itemize}

\vspace{.3cm}

These two examples are mathematically similar, and in fact the first example is a special case of the second. The set of stochastic generators $S$ that are also symmetric matrices forms a strict subset of symmetric matrices, so $\hat{s}_{s,f}(U(t))=0$ for  all $s,f$ and $t$ implies that $\hat{q}_{s,f}(U(t))=0$ for all $s,f$, and $t$.

\subsection*{Quantum Probability Current}

The goal of this paper is to study a transition symmetry problem introduced in \cite{Z13}. As in the classification of amplitude symmetry, we let $U(t)=e^{-itH}$ for $H$ Hermitian. We define $P_{s\to f}(U(t))=|U(t)_{sf}|^2=|U(t)_{fs}^\dagger|^2$. If the quantum walk starts with a particle at vertex $s$, then $P_{s\to f}(U(t))$ gives the probability of seeing vertex $f$ after applying $U(t)$ and performing a measurement. The \emph{quantum probability current} is defined as $\gamma_{sf}(U(t)):=|U(t)_{fs}|^2-|U(t)^\dagger_{sf}|^2$. 

\begin{defn}
 A quantum process is \emph{time-symmetric} if and only if $\gamma_{sf}(U(t))=0$ for all $s,f$ and $t$.
\end{defn}

 Both time-symmetric stochastic processes and amplitude symmetric processes are special cases of processes symmetric with respect to the  quantum probability current.  Indeed, we see that for a quantum process with unitary operator $e^{-itH}$, if $H=H^T$, then $H$ must be real and thus trivially time-symmetric. So both of the aforementioned examples arise as subsets of time-symmetric chiral quantum walks \cite{Z13}. However, we shall see that there are other walks which are time-symmetric that do not fall into the above situations will be shown.

We shall give a precise characterization of those Hamiltonians that enable time-symmetry breaking by viewing them as the adjacency matrix of a weighted graph. For the previous examples, time-symmetry was equivalent to the Hamiltonian defining an bidirected weighted graph (which means it can be viewed as undirected). In the general setting, there are further essential conditions the graph satisfy.

\subsection*{The Vanishing of the Quantum Probability Current}\label{sec:probcurr}
An operator $U(t)$ is time-symmetric if and only if there exists a diagonal unitary matrix $\Lambda$ such that either $\Lambda U(t)\Lambda^\dagger=U^T(t)$ or $\Lambda U(t)\Lambda^\dagger=U^\dagger(t)$. Equivalently, we say that an operator is time-symmetric if $U(t)$ can be related to $U(t)^T$ by diagonal unitary matrices and/or complex conjugation in the basis of $B$.

 We call this action the $\Lambda$-action and we will completely determine when a unitary matrix $U$ can be taken to $U^\dagger$ or $U^T$ by this action.  Our classification does not depend on the value of $t$ (unless $t=0$), and so we simply consider unitary matrices instead of unitary operators throughout the rest of this paper.

 \begin{defn}
 The \emph{support} of an $n\times n$ matrix $M$ is a digraph with $n$ vertices and an edge from vertex $i$ to vertex $j$ if $M_{ij}\ne0$. We denote it by $supp(M)$.
\end{defn}

Note that for a Hermitian matrix $H$, the support may be viewed as undirected as $H_{ij}\ne0$ if and only if $H_{ji}=\overline{H_{ij}}\ne0$. We can think of $H$ as a weighted adjacency matrix. Then we have the weight of traveling in one direction along an edge is the complex conjugate of the weight of traveling the opposite direction.

Our classification will be in terms of the Hamiltonian of an operator. This is because $\Lambda e^{-itH}\Lambda^\dagger=e^{-it\Lambda H\Lambda^\dagger}$ and $\overline{e^{-itH}}=e^{it\overline{H}}$. So each of the symmetries we consider induces a natural action on the Hamiltonians defining the operator.

\section*{Results}

The presence of Aharonov-Bohm phases are a known obstacle to time-symmetry in gauge theory. Much work has been done to understand gauge theory on metric graphs, and recently this was simplified further to understanding gauge potentials on combinatorial graphs \cite{harrison2011quantum}. In this setting, one considers the graph to be support of the Hamiltonian $H$. The $\Lambda$-action on $H$ induces an action on the model which corresponds precisely to the gauge transformations in this setting. We call this model the Harrison, Keating, Robbins (HKR) model after the authors; see to side box.

\begin{WrapText}
 \framebox{\textbf{Classification of HKR model up to Gauging.}}
 
 \begin{itemize}
  \item Given a Hamiltonian, the HKR model defines a tight binding model on $\tn{supp}(H)$ whose configuration space is the set of vertices.
  \item  If we write $H=\{r_{jk}e^{i\theta_{jk}}\}$, then let $\Omega=\{\theta_{jk}\}$, with $\theta_{jk}=0$ if $r_{jk}=0$.
  \item This is a skew-symmetric matrix which is defined as a gauge potential on $\tn{supp}(H)$ \cite{harrison2011quantum}.
  \item Theorem \ref{thm:invseparate} states that two Hamiltonians, $H_1$ and $H_2$, define the same instance of the HKR model up to gauge transformations if and only if $w_c(H_1)=w_c(H_2)$ for all cycles $c\in\tn{supp}(H_1)$. 
  \item This implies the result that a Hamiltonian $H$ defines a trivial gauge potential (up to gauge transformations) if and only if all the Aharonov-Bohm phases are zero.
 \end{itemize}

\end{WrapText}

Understanding when $H$ is equivalent to a real matrix under the $\Lambda$-action is the same as knowing when the gauge potential induced by $H$ is equivalent to a trivial one. This question was solved in \cite{harrison2011quantum}, however the answer is also implied by our more general results. We are also interested in knowing when $H$ and $\overline{H}$ can be related by the $\Lambda$-action for the following reason.

As previously mentioned, fixing a basis $B$ defines a complex conjugation involution on linear maps, $M \mapsto \overline{M}$. 
It is always the case that $P_{s\to f}(\overline{U(t)})=P_{s\to f}(U(t))$. But if $H$ and $\overline{H}$ can be related by the $\Lambda$-action, we have that
\begin{eqnarray}
P_{s\to f}(\overline{U(t)})&=&|f^\dagger \overline{e^{-itH}} s|^2\\ &=&|f^\dagger e^{it \overline{H}} s|^2=|f^\dagger e^{itH}s|^2.\nonumber
\end{eqnarray}

So if $H$ is conjugate via the $\Lambda$-action to its transpose, then $H$ is time-symmetric combining the $\Lambda$-action with the action of 
complex conjugation in our chosen basis.

\begin{defn}
 Let $H$ be Hermitian and $c=i_1\to i_2\to\cdots\to i_k$ be a path in $supp(H)$. Then we define the weight of $p$ to be $w_c(H)=H_{i_1i_2}\cdots H_{i_{k-1}i_k}$. We define the \emph{Aharonov-Bohm phase} of a cycle $c$ to be the complex angle of $w_c(H)$.
\end{defn}

The Aharonov-Bohm phase defined above coincides with the classical notion of the Aharonov-Bohm phase for the HKR model. As we will see, these phases completely determine the gauge potentials up to gauge transformation. However, they do not completely determine when two Hamiltonians are equivalent under the $\Lambda$-action. For that, the values $w_c(H)$ are necessary. We see that the polynomials $w_c(H)$ are polynomial invariants of the Hamiltonians under the gauge transformations and in fact they form a complete set of invariants.

\begin{theorem}\label{thm:invseparate}
 Two Hamiltonians $H_1$ and $H_2$ are related by the $\Lambda$-action if and only if $w_c(H_1)=w_c(H_2)$ for all $c$. Furthermore, two gauge potentials can be related by a gauge transformation if and only if their Aharonov-Bohm phases coincide.
\end{theorem}
\begin{theorem}\label{thm:realinvs}
 A Hermitian matrix $H$ takes real values on its invariants if and only if it is conjugate to a real matrix via the $\Lambda$-action. Furthermore, $H$ is conjugate to $\overline{H}$ if and only if $H$ is also conjugate to a real matrix.
\end{theorem}
\begin{corollary}\label{cor:realinvs}
 If a Hermitian matrix takes real values on all of its invariants, it is time-symmetric.
\end{corollary}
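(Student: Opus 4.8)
The plan is to read this off from Theorem~\ref{thm:realinvs} together with the reformulation of time-symmetry obtained in Section~\ref{sec:probcurr}, namely that $U(t)=e^{-itH}$ is time-symmetric exactly when there is a single diagonal unitary $\Lambda\in U(1)^n$ with $\Lambda U(t)\Lambda^\dagger=U^T(t)$ (or $=U^\dagger(t)$) for all $t$. So the whole task reduces to exhibiting such a $\Lambda$ at the level of the Hamiltonian.

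First I would apply Theorem~\ref{thm:realinvs}: because the Hermitian matrix $H$ takes real values on all of its invariants, it is conjugate via some $\Lambda\in U(1)^n$ to a real matrix, which, being a conjugate of a Hermitian matrix, is real and symmetric. I will use this in the equivalent form $\Lambda H\Lambda^\dagger=\overline{H}=H^T$, the last equality being Hermiticity of $H$. Then I would exponentiate, using $\Lambda e^{-itH}\Lambda^\dagger=e^{-it\Lambda H\Lambda^\dagger}$ (noted already in the text): $\Lambda U(t)\Lambda^\dagger=e^{-itH^T}=\bigl(e^{-itH}\bigr)^T=U^T(t)$ for every $t$, with the same $\Lambda$. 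By the characterization above this says precisely that $U(t)$ is time-symmetric. If one wants the conclusion spelled out in terms of transition probabilities rather than quoting that characterization: conjugation by the diagonal unitary $\Lambda$ only multiplies off-diagonal entries by phases, so $|U(t)_{fs}|=|(\Lambda U(t)\Lambda^\dagger)_{fs}|=|U^T(t)_{fs}|=|U(t)_{sf}|$, and hence $\jmath_{sf}(U(t))=0$ for all $s,f,t$.

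I do not expect a real obstacle here — all of the content lives in Theorem~\ref{thm:realinvs}, and what remains is assembling it with two elementary facts that are already in hand: that $\overline{H}=H^T$ for Hermitian $H$, and that conjugating $H$ by $\Lambda$ conjugates $U(t)$ by $\Lambda$. The one point I would be careful to state is that the conjugating matrix $\Lambda$ furnished by Theorem~\ref{thm:realinvs} does not depend on $t$, which is exactly what lets the time-symmetry hold for all times at once.
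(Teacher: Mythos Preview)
Your proof is correct and follows essentially the same route as the paper: both deduce the result immediately from Theorem~\ref{thm:realinvs}. The paper's one-line argument invokes the ``conjugate to a real matrix'' clause of that theorem, while you invoke the equivalent ``conjugate to $\overline{H}$'' clause and then spell out the exponentiation step explicitly; this is the same idea with more detail filled in. One small wording point: the $\Lambda$ that sends $H$ to a real matrix need not be the same $\Lambda$ that sends $H$ to $\overline{H}$, so when you pass from the first clause to ``the equivalent form $\Lambda H\Lambda^\dagger=\overline{H}$'' you are tacitly re-choosing $\Lambda$; this is harmless but worth saying.
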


Theorem \ref{thm:invseparate} completely classifies gauge potentials on combinatorial graphs. Theorem \ref{thm:realinvs} tells us that a gauge potential $\Omega$ is gauge equivalent to $\Omega^T$ if and only if it is equivalent to a trivial gauge potential. Corollary \ref{cor:realinvs} confirms what one might expect from a gauge theoretic perspective; that time-symmetry is closely related to having no non-trivial Aharonov-Bohm phases.  

It is surprising, therefore, that having a trivial gauge potential is not the only way to have time-symmetry. Suppose that we have a Hamiltonian $H$ such that there is a $\Lambda$ with $\Lambda H\Lambda^\dagger=-H$. Then $\Lambda U(t)\Lambda^\dagger=U(-t)$, allowing us to reverse time with a gauge transformation.

\begin{theorem}\label{thm:conjtoneg}
 A Hamiltonian $H$ with zero diagonal is conjugate to its negative under the $\Lambda$-action if and only if its support is bipartite.
\end{theorem}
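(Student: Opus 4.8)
The plan is to translate the matrix identity $\Lambda H \Lambda^\dagger = -H$ into a combinatorial condition on $supp(H)$ and then recognize that condition as bipartiteness. Writing $\Lambda = \operatorname{diag}(e^{i\phi_1},\dots,e^{i\phi_n})$ gives $(\Lambda H \Lambda^\dagger)_{jk} = e^{i(\phi_j-\phi_k)}H_{jk}$, and since $H$ has zero diagonal the identity $\Lambda H\Lambda^\dagger = -H$ holds exactly when $e^{i(\phi_j-\phi_k)} = -1$ for every ordered pair $(j,k)$ with $H_{jk}\neq 0$; equivalently, for every edge $jk$ of $supp(H)$ we need $\phi_j - \phi_k \equiv \pi \pmod{2\pi}$. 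So the theorem reduces to the claim that such a phase function $\phi\colon V(supp(H)) \to \R/2\pi\ZZ$ exists if and only if $supp(H)$ is bipartite.

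For the ``if'' direction I would take a bipartition of $supp(H)$ into classes $A$ and $B$ (done on each connected component separately when the support is disconnected), set $\phi \equiv 0$ on $A$ and $\phi \equiv \pi$ on $B$, and observe that since every edge of $supp(H)$ runs between $A$ and $B$ the required relation $\phi_j - \phi_k \equiv \pi \pmod{2\pi}$ holds along every edge; the corresponding $\Lambda$ then conjugates $H$ to $-H$.

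For the ``only if'' direction I would start from a valid $\phi$, pick an arbitrary cycle $i_1 \to i_2 \to \cdots \to i_k \to i_1$ in $supp(H)$, and sum the edge relations around it: the left side telescopes to $0$, while the right side is $k\pi \pmod{2\pi}$, forcing $k$ even. Since every cycle of $supp(H)$ then has even length, $supp(H)$ is bipartite by the standard $2$-colouring criterion.

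I do not expect a genuine obstacle here; the argument is elementary, so the only care needed is bookkeeping: spelling out that ``all cycles even'' really does imply bipartite (the classical colouring argument, applied component by component) and noting that the zero-diagonal hypothesis is exactly what excludes the one troublesome case of a self-loop, where $\phi_j - \phi_j = 0 \not\equiv \pi$. As a cross-check consistent with Section~\ref{sec:gaugefields}, one can instead invoke Theorem~\ref{thm:invseparate}: $-H$ is Hermitian with zero diagonal, the weight of a closed walk $c$ of length $\ell$ satisfies $w_c(-H) = (-1)^{\ell} w_c(H)$, and $w_c(H)\neq 0$ for closed walks $c$ in the support, so $H$ and $-H$ agree on all invariants precisely when every closed walk (equivalently, every cycle) in $supp(H)$ has even length, i.e.\ precisely when $supp(H)$ is bipartite.
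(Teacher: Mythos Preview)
Your proof is correct. The ``if'' direction is essentially identical to the paper's (your $\phi\equiv 0$ on $A$, $\phi\equiv\pi$ on $B$ is exactly the paper's $\Lambda_{ii}=\pm 1$). For the ``only if'' direction you take a more elementary route: you read off the edge constraint $\phi_j-\phi_k\equiv\pi\pmod{2\pi}$ directly from $\Lambda H\Lambda^\dagger=-H$ and telescope around a cycle to force even length. The paper instead invokes Theorem~\ref{thm:invseparate} (which rests on the Hilbert--Mumford criterion and closed-orbit machinery of Section~\ref{sec:gaugefields}) to conclude $w_c(H)=w_c(-H)=(-1)^{|c|}w_c(H)$ and hence $|c|$ even---precisely the argument you offer as a cross-check. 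Your primary argument is self-contained and avoids the geometric invariant theory overhead; the paper's argument has the advantage of illustrating that Theorem~\ref{thm:invseparate} is doing real work and keeps the narrative unified with the gauge-potential classification.
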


Note that this necessarily requires that $H$ have zeros on the diagonal. However, one can trivially see that for $\alpha\in\mathbb{R}$, \begin{eqnarray}
P_{s\to f}(e^{-it(H+\alpha I)})&=&P_{s\to f}(e^{-i\alpha tI}e^{-itH})\\
&=&P_{s\to f}(e^{-itH}).\nonumber
\end{eqnarray}
 So if $H$ is time-symmetric, $H+\alpha I$ is as well. In particular, the $supp(H)$ can be bipartite with self-loops if the every loop has the same weight. However, one cannot add arbitrary diagonal matrices to $H$ and preserve time symmetry.
 
 \begin{example}\label{ex:disorder}
The following Hamiltonian is not time-symmetric however because the diagonal contains distinct entries, although if the diagonal is made to be all zeros, then the support is bipartite.
\begin{equation*}
 \begin{pmatrix}
  1&1&0&i\\
  1&2&1&0\\
  0&1&3&1\\
  -i&0&1&4
 \end{pmatrix}
\end{equation*}
\end{example}
\vspace{.3cm}
\subsection*{Disorder and Phase Independence}

Using the invariant techniques we developed, we can quickly answer two other interesting questions. Suppose disorder is added to a system by changing the energies of nodes, i.e. having self loops of different weights. This corresponds to adding a real diagonal matrix to the Hamiltonian. As Example \ref{ex:disorder} shows, adding disorder can break time symmetry. The following proposition gives a necessary condition for this to not break time symmetry.
\begin{proposition}\label{prop:adddiag}
If $H$ is conjugate to a real matrix via the $\Lambda$-action, then for any real diagonal matrix $D$, $H+D$ is time-symmetric.
\end{proposition}

If we are given a Hamiltonian $H=(h_{ij}e^{i\alpha_{ij}})$ with $h_{ij},\alpha_{ij}\in\mathbb{R}_{\ge0}$, the second interesting question is how the transition probabilities are affected by the choice of phases $\alpha_{ij}$. It was shown in \cite{Z13} that if the underlying graph of $H$ is a tree, then surprisingly, the transition probabilities are not affected by the choice of $\alpha_{ij}$.

\begin{proposition}\label{prop:alphaind}
 The transition probabilities of a walk is independent of the choice of $\alpha_{ij}$ if and only if the (undirected) support of $H$ is a tree (with possible self-loops). 
\end{proposition}

\section*{Discussion}

Our theory provides more than a classification of quantum processes and should have applications in several related fields.  For example, our work can be related to the theory of tomography where it can serve as a theory for classification.

The doubly stochastic matrices arise by taking a unitary matrix $U=\{u_{ij}\}$ to the matrix $\{|u_{ij}(t)|^2\}$. Understanding the image of this map inside the Birkhoff polytope, called the set of unistochastic matrices, has been an important problem in tomography. Of note is the problem of reconstructing the Cabibbo-Kobayashi-Masakawa matrix describing quark decay \cite{cabibbo1963unitary,kobayashi1973cp}. Experimentally only the transition probabilities can be determined \cite{jarlskog1985commutator,dita2005global}. This matrix was important in demonstrating another symmetry violation, namely CP-violation, and lead to a Nobel Prize in Physics. Work on reconstructing the Maki-Nakagawa-Sakata matrix in neutrino physics is still an unfinished \cite{maki1962remarks}. The unistochastic matrices also are important in scattering theory \cite{mennessier1974some} and quantum information \cite{werner2001all}.

An important aspect of the map taking the unitary group to the set of unistochastic matrices are natural symmetries on the fibers. Every fiber has set of fundamental symmetries sometimes called Haagerup equivalence, although these do not generally account for all of the symmetries. In this subject, more focus has been placed on understanding the exotic symmetries. However, the Haagerup equivalence symmetries are omnipresent in these problems. Furthermore, they correspond to the symmetries classified with the theory in this paper: the $\Lambda$-action and complex conjugation.

\section*{Methods}

We now develop the necessary techniques and provide proofs of our results. If $U(t)$ is time-symmetric, we must have $P_{s\to f}(U(t))=P_{s\to f}(U^T(t))=P_{s\to f}(U^\dagger(t))$ for all $s,f\in B$ and times $t$. We approach this problem by understanding the symmetries (and what they leave invariant) that preserve the preferred basis and take unitary operators to unitary operators.
 
Since $U(t)$, $U^T(t)$, and $U^\dagger(t)$ are all unitary operators, if they are related by a symmetry, it must be a completely positive map. Furthermore, since we demand that this map holds for all time $t$, we see that there must exist a unitary matrix $V$ such that either $VU(t)V^\dagger=U^T(t)$ or $VU(T)V^\dagger=U^\dagger(t)$. This is because the only linear maps commuting with matrix multiplication is conjugation and scaling. However, we have fixed the diagonal of our Hamiltonian to be zero so we cannot scale $U(t)$ non-trivially.

 Lastly, since we are working in a preferred basis $B$, $V$ must be a combination of a permutation of the basis elements of $B$ and a matrix that has this basis as eigenvectors. However, since we label our basis vectors, conjugation by a permutation matrix is simply a relabeling and thus does nothing with respect to our classification. This means that $V$ can be taken to be a unitary matrix that is diagonal in the basis $B$.

Therefore, we have that $U(t)$ is time-symmetric if and only if there exists a diagonal unitary matrix $\Lambda$ such that either $\Lambda U(t)\Lambda^\dagger=U^T(t)$ or $\Lambda U(t)\Lambda^\dagger=U^\dagger(t)$. Equivalently, we say that an operator is time-symmetric if $U(t)$ can be related to $U(t)^T$ by diagonal unitary matrices and/or complex conjugation in the basis of $B$.

In order to prove Theorems \ref{thm:invseparate} and \ref{thm:realinvs}, we must first establish a few lemmas and rely on techniques from geometric invariant theory.

The $\Lambda$-action is the same as conjugation by $U(1)^n$, if $H$ is $n\times n$. This is the group of gauge transformations for the HKR model. We consider the polynomial invariants of this action which are generated by the weights of cycles as we defined above (which are allowed to repeat edges or vertices) in the support of $H$ of length $\le n$.

 More formally, let $\Mat_n(\C)$ denote the set of $n\times n$ matrices in basis $B$.  Then the polynomials $H_{i_1i_2}\cdots H_{i_ki_1}$, where $1\le k\le n$ generate the invariant ring $\C[\Mat_n(\C)]^{U(1)^n}$, where $\C[\Mat_n(\C)]^{U(1)^n}$ is the ring of polynomials that are constant on the orbits of $U(1)^n$. We emphasize that invariants of the form $H_{ij}H_{ji}=|H_{ij}|^2$ and $H_{ii}$ are included. However, these invariants are necessarily real and so we call these the \emph{trivial invariants}.

We claim that two Hermitian matrices are in the same $U(1)^n$ orbit if and only if they have the same values on all invariants. It is easy to see that if two Hamiltonians differ in their invariants, they cannot be in the same orbit as these functions are constant on orbits. Furthermore, a function that is constant on a set is also constant on the closure of that set, so orbits whose closures intersect cannot be distinguished via invariants. We prove that this does not happen. 

We mention that since we are considering polynomial invariants, the correct topology to use is, \emph{a priori}, the Zariski topology. However, in this setting, it is well known that the closure of an orbit in the Zariski and Euclidean topologies will coincide. 

We wish to show that the orbit closures of two Hermitian matrices do not intersect. We first consider the orbit closures of Hermitian matrices under the action of the closely related group $\GL(1)^n$. This group action has exactly the same invariant polynomials as $U(1)^n$ as it is well known that the unitary group is Zariski dense inside the general linear group of the same dimension. This is sometimes called Weyl's trick. See for example \cite{kraft2000classical}.

\begin{theorem}[Hilbert-Mumford Criterion \cite{kempf1978instability}]\label{thm:hmcrit}
 If $G$ is a product of general linear groups acting on a complex vector space $V$, then  if $\overline{G.v}-G.v\ne\emptyset$, there is some $y\in\overline{G.v}-G.v\ne\emptyset$ and a 1-parameter subgroup $\lambda:\C^\times\to G$ such that $\lim_{t\to 0}{\lambda(t).v}=y$.
\end{theorem}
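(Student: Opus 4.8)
Since this is the classical Hilbert--Mumford criterion (Hilbert, Mumford, and in the sharp form of Kempf), the plan is to recall the standard proof: convert an abstract degeneration inside the orbit closure into a one--parameter one, exploiting that $G$ being a product of general linear groups makes its loop group explicit via Smith normal form over $\C[[t]]$.

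First I would pass to a formal disc. The orbit morphism $G\to V$, $g\mapsto g.v$, has image $G.v$, a smooth irreducible locally closed subvariety of $X:=\overline{G.v}$, so $Z:=X\setminus G.v$ is closed of strictly smaller dimension and by hypothesis nonempty. Normalising $X$, choosing an affine curve through a preimage of a point of $Z$ that also meets the dense open set $G.v$, and completing at that point, I obtain (after a harmless ramified base change $t\mapsto t^{1/m}$) a map $\spec\C[[t]]\to X$ whose generic point lands in $G.v$ and whose closed point lands in $Z$; equivalently a point $v'\in(G.v)(\C((t)))$ extending over $\C[[t]]$ to some $y\in Z$.

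Next I would lift to the loop group and split off a cocharacter. After a further ramified base change, lift $v'$ to $g(t)\in G(\C((t)))$ with $g(t).v=v'$. Since $G=\prod_i\GL_{n_i}$, Smith normal form over the DVR $\C[[t]]$ gives the Cartan decomposition $g(t)=k(t)\,\lambda(t)\,k'(t)$ with $k(t),k'(t)\in G(\C[[t]])$ and $\lambda\colon\C^\times\to G$ an honest cocharacter (block--diagonal with monomial entries). As $t\to 0$, $k(t)\to k(0)$ and $k'(t)\to k'(0)$ in $G(\C)$, both invertible. Setting $h(t):=k'(t).v\in V\otimes\C[[t]]$ and expanding $h(t)=\sum_j h^{(j)}(t)$ into $\lambda$--weight spaces, the fact that $k(t)^{-1}g(t).v=\lambda(t)h(t)$ again lies in $V\otimes\C[[t]]$ forces the negative--weight parts of $h$ to vanish at $t=0$; hence $\lim_{t\to 0}\lambda(t).\bigl(k'(0).v\bigr)=h^{(0)}(0)$ exists and lies in $X$. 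Since $k'(0).v$ lies in the same $G$--orbit as $v$, this produces a one--parameter subgroup whose limit on a translate of $v$ exists and is a candidate boundary point.

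The main obstacle I expect is showing that this limit genuinely lands in $Z$ rather than back in $G.v$, i.e.\ that the extracted $\lambda$ is a true destabiliser. Tracing leading terms, $y$ equals $k(0)$ applied to $h^{(0)}(0)$ plus contributions of the strictly negative--weight parts of $h(t)$, so the candidate limit need not literally equal $k(0)^{-1}y$, and one has to rule out the possibility that it slips back into the orbit. Mumford's argument closes this gap with a careful leading--order analysis, exploiting that $y$ itself lies in $Z$ and that a nontrivial $\C^\times$--orbit is never closed (its $t\to 0$ limit is a $\lambda$--fixed point lying outside the orbit). Everything else is routine: curve selection and generic smoothness, lifting along a dominant morphism after a finite cover, and the linear algebra of Smith normal form over $\C[[t]]$.
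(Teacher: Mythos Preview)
The paper does not prove this statement at all: Theorem~\ref{thm:hmcrit} is quoted from the literature (Kempf's paper) and used as a black box, so there is no ``paper's own proof'' to compare against. Your sketch is the standard Mumford--Kempf argument (curve selection to the boundary, lift to $G(\C((t)))$, Cartan/Smith decomposition over the DVR $\C[[t]]$, extract the cocharacter), and you have correctly isolated the one genuinely delicate point---that the limit $\lim_{t\to 0}\lambda(t).(k'(0).v)$ lands in the boundary $Z$ rather than back in $G.v$. That step is exactly where the work lies in any honest write-up; your outline is otherwise sound, and nothing in the paper either supplies or requires more than the citation.
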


For the action of of $\GL(1)^n\actson\Mat_n(\C)$ by conjugation, the only 1-parameter subgroups in $\GL(1)^n$ are diagonal matrices of the form $\lambda(t)_{ii}=t^{\alpha_i}$, $\alpha_i\in\ZZ$ (cf. \cite{kraft2000classical}). If the orbit of $M$ is closed, then there is no 1-parameter subgroup taking $M$ outside of its orbit by Theorem \ref{thm:hmcrit}. If $\lambda(t)$ is diagonal with $\lambda(t)_{ii}=t^{\alpha_i}$, then $\lambda(t)M\lambda(t)^{-1}=\{t^{\alpha_i-\alpha_j}m_{ij}\}$. If $\lim_{t\to0}{\lambda(t)M\lambda(t)^{-1}}$ exists, then no negative power of $t$ appears in $\lambda(t)M\lambda(t)^{-1}$. Furthermore, the limit sends some of the entries of $M$ to zero and leaves the rest unchanged.
 
\begin{proposition}\label{cor:lambdaclosed}
 Hermitian matrices have closed orbits under the action of $\GL(1)^n$.
\end{proposition}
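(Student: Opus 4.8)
The plan is to apply the Hilbert--Mumford criterion (Theorem~\ref{thm:hmcrit}) directly, using the explicit description of $1$-parameter subgroups of $\GL(1)^n$ recalled just above the statement. Suppose, for contradiction, that the orbit $\GL(1)^n.H$ of a Hermitian matrix $H$ is \emph{not} closed. Then $\overline{\GL(1)^n.H}-\GL(1)^n.H\ne\emptyset$, so by Theorem~\ref{thm:hmcrit} there is a $1$-parameter subgroup $\lambda:\C^\times\to\GL(1)^n$, necessarily of the form $\lambda(t)_{ii}=t^{\alpha_i}$ with $\alpha_i\in\ZZ$, such that $y:=\lim_{t\to0}\lambda(t)H\lambda(t)^{-1}$ exists and lies in the boundary; in particular $y\notin\GL(1)^n.H$.

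Next I would unwind the conjugation: $\lambda(t)H\lambda(t)^{-1}=\{t^{\alpha_i-\alpha_j}H_{ij}\}$, so the existence of the limit forces $\alpha_i-\alpha_j\ge0$ for every pair $(i,j)$ with $H_{ij}\ne0$. Here is where Hermiticity enters: since $H=H^\dagger$ we have $H_{ij}\ne0$ if and only if $H_{ji}\ne0$, so the same argument applied to the entry $(j,i)$ gives $\alpha_j-\alpha_i\ge0$ as well. Hence $\alpha_i=\alpha_j$ whenever $H_{ij}\ne0$, i.e.\ the exponents $\alpha_i$ are constant on each connected component of $supp(H)$. Consequently $t^{\alpha_i-\alpha_j}H_{ij}=H_{ij}$ identically for all $i,j$ (this is clear when $H_{ij}\ne0$, and trivial when $H_{ij}=0$), so in fact $\lambda(t)H\lambda(t)^{-1}=H$ for all $t$ and $y=H\in\GL(1)^n.H$, contradicting $y\notin\GL(1)^n.H$. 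Therefore the orbit is closed.

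I do not expect a genuine obstacle here: the computation of $\lambda(t)H\lambda(t)^{-1}$ and the structure of $1$-parameter subgroups of $\GL(1)^n$ are already in hand from the discussion preceding the statement, and the real content is the one-line observation that the two inequalities $\alpha_i-\alpha_j\ge0$ and $\alpha_j-\alpha_i\ge0$ forced by Hermiticity collapse to $\alpha_i=\alpha_j$ along every edge of the support. The point that must be handled with a little care is the logical form of the Hilbert--Mumford criterion --- it only asserts that a destabilizing $1$-parameter subgroup \emph{exists} once the orbit boundary is nonempty, so the argument is cleanest when phrased as deriving a contradiction from the existence of such a $\lambda$ (our computation in fact shows more, namely that \emph{no} $1$-parameter subgroup moves $H$ off its orbit while having a limit, but the contradiction form suffices). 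A secondary remark worth including is that it is legitimate to work with $\GL(1)^n$ in place of $U(1)^n$ because, as noted above via Weyl's trick, the two groups have the same invariants and the same orbit closures, so the closedness proved here is exactly what is needed downstream for the $U(1)^n$-orbits.
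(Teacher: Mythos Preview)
Your proof is correct and follows essentially the same route as the paper: apply the Hilbert--Mumford criterion, compute $\lambda(t)H\lambda(t)^{-1}=\{t^{\alpha_i-\alpha_j}H_{ij}\}$, and use Hermiticity to force $\alpha_i=\alpha_j$ along every edge of $supp(H)$, so that $\lambda(t)$ fixes $H$. The paper phrases the key step contrapositively (if $\alpha_i>\alpha_j$ and $H_{ij}\ne0$ then the $(j,i)$ entry blows up), but the content is identical; your one tangential remark that $U(1)^n$ and $\GL(1)^n$ have ``the same orbit closures'' is not quite what the paper uses downstream (it instead passes from $\GL(1)^n$-equivalence to $U(1)^n$-equivalence via the fact that Hermitian matrices similar by an invertible matrix are similar by a unitary one), but this does not affect your argument here.
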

\begin{proof}
Let $\lambda(t)$ be a 1-parameter subgroup such that $\lim_{t\to 0}{\lambda(t)M\lambda(t)^{-1}}$ exists, $M$ Hermitian. Then suppose $t^{\alpha_i-\alpha_j}m_{ij}$, $\alpha_i>\alpha_j$, is an non-zero entry of $\lambda(t)M\lambda(t)^{-1}$. Then so is $t^{\alpha_j-\alpha_i}m_{ji}$. Thus it's easy to see that as $t\to 0$, $t^{\alpha_j-\alpha_i}m_{ji}$ goes to infinity. So it must be that $\lambda(t)M\lambda(t)^{-1}=M$, implying it has a closed orbit.
\end{proof}

The following proposition tells us that by restricting our view to matrices that have closed orbits, all the information we need is given by the invariant polynomials.

\begin{proposition}[\cite{MR1304906}]\label{prop:closedorbits}
 For a product of general linear groups acting on a complex vector space $V$, if $M_1$ and $M_2$ have closed orbits, they are in the same orbit if and only if they agree on all invariants.
\end{proposition}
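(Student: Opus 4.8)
The plan is to deduce the statement from the basic separation property of invariants of reductive group actions, the only substantive input being that a product of general linear groups is linearly reductive over $\C$ (each factor $\GL(n_i)$ is reductive and we are in characteristic zero). One direction is immediate: an invariant polynomial is by definition constant on every orbit, hence takes equal values on $M_1$ and $M_2$ as soon as $G.M_1 = G.M_2$. The work is the converse, so I would assume $M_1, M_2$ have closed orbits and $f(M_1) = f(M_2)$ for all $f \in \C[V]^G$, and suppose toward a contradiction that $G.M_1 \ne G.M_2$.

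First I would invoke the \emph{Reynolds operator}: linear reductivity of $G$ gives a linear projection $R \colon \C[V] \to \C[V]^G$ that restricts to the identity on $\C[V]^G$, obtained by averaging a polynomial over a maximal compact subgroup of $G$ (a product of unitary groups); the average is genuinely $G$-invariant because the maximal compact is Zariski dense, which is Weyl's trick already used above. The one property I need is that if $Z \subseteq V$ is $G$-invariant and closed and $f$ vanishes on $Z$, then $R(f)$ vanishes on $Z$: $R(f)$ is a limit of averages of the translates $g \cdot f$, each of which vanishes on the $G$-stable set $Z$.

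Now, since $G.M_1$ and $G.M_2$ are distinct orbits they are disjoint, and they are closed by hypothesis, so $Z_1 := G.M_1$ and $Z_2 := G.M_2$ are disjoint closed subsets of $V$. By the Nullstellensatz their vanishing ideals are comaximal, so $1 = f_1 + f_2$ with $f_i$ vanishing on $Z_i$. Applying $R$ and using $R(1) = 1$ gives $1 = R(f_1) + R(f_2)$ in $\C[V]^G$. By the previous paragraph $R(f_1)$ vanishes on $Z_1$, so $R(f_1)(M_1) = 0$, while $R(f_2)$ vanishes on $Z_2$, so $R(f_1)(M_2) = 1 - R(f_2)(M_2) = 1$. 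Thus $R(f_1) \in \C[V]^G$ is an invariant distinguishing $M_1$ from $M_2$, contradicting the assumption; hence $G.M_1 = G.M_2$.

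The main obstacle is really bookkeeping around the reductive-GIT input: the existence of the Reynolds operator (equivalently, that the affine quotient map $V \to \spec \C[V]^G$ is surjective with each fibre containing a unique closed orbit). In our application $G = \GL(1)^n$ is a torus, where all of this is especially concrete, but it costs nothing to run the argument for a general product of general linear groups. An alternative packaging, which the cited reference supplies directly, is to say that $\pi \colon V \to \spec \C[V]^G$ is a categorical quotient: ``agreeing on all invariants'' means exactly $\pi(M_1) = \pi(M_2)$, the fibre $\pi^{-1}(\pi(M_1))$ contains a unique closed orbit, and $G.M_1, G.M_2$ are both that orbit. Either way only the stated reductivity and the formal steps above are needed.
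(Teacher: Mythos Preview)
Your argument is correct and is the standard GIT proof: the Reynolds operator separates disjoint $G$-stable closed sets, so distinct closed orbits are separated by an invariant. The paper itself provides no proof of this proposition---it is quoted as a black box from the cited reference---so there is nothing to compare against; your write-up simply fills in what the paper outsources.
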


\noindent\textbf{{Proof of Theorem \ref{thm:invseparate}}}
\begin{proof}
The first assertion follows from Propositions \ref{cor:lambdaclosed} and \ref{prop:closedorbits} along with the observation that two Hermitian matrices are similar via a change of basis if and only if they are similar via a unitary change of basis.

For the second assertion, we again note that the norms of the entries of a Hamiltonian are left unaffected by the $\Lambda$-action. Given gauge potentials $\Omega=\{\theta_{jk}\}$ and $\Omega'=\{\theta'_{jk}\}$ with $j,k\in[n]$, we consider the Hamiltonians $H=\{e^{i\theta_{jk}}\}$ and $H'=\{e^{i\theta'_{jk}}\}$. 

For a cycle $c:i_1\to i_2\to\cdots\to i_k\to i_1$ in the complete graph on $n$ vertices $K_n$, let $\omega_c:=\theta_{k1}+\sum_{j=1}^k{\theta_{j,j+1}}$ be the associated Aharanov-Bohm phase for the gauge potential $\Omega$. We define $\omega'_c$ similarly. Then $H$ and $H'$ can be related by the $\Lambda$-action if and only if $w_c(H)=e^{i\omega_c}=e^{i\omega'_c}=w_c(H')$ for all cycles $c$ in $K_n$. This implies that the two gauge potentials are equivalent if and only if $\omega_c=\omega'_c$ (modulo $2\pi$) for all cycles $c$ in $K_n$. That is to say, they agree if and only if their Aharanov-Bohm phases coincide.
\end{proof}
\noindent{\textbf{Proof of Theorem \ref{thm:realinvs}}}
\begin{proof}
 If $H$ is conjugate to a real matrix or its support is a tree, this it is clear that all of its invariants are real. Now suppose $H$ takes real values on all of its invariants. Note that $|H|$ takes the same value on all of these invariants. Since $H$ and $|H|$ are both Hermitian, their orbits are closed and thus there is a $\Lambda\in U(1)^n$ such that $|H|=\Lambda H\Lambda^{\dagger}$ by Theorem \ref{thm:invseparate}.
 
Now let $w_p(H)$ be the invariant defined by looking at the weight of the cycle $p$ in the support of $H$ with edge weights induced by $H$. Then it is clear that $w_p(H)=\overline{w_p(\overline{H})}$. So we see that $H$ and $\overline{H}$ are conjugate if and only if $w_p(H)=\overline{w_p(H)}$ by Theorem \ref{thm:invseparate}. But this happens if and only if all invariants of $H$ are real and by the first assertion of the theorem, is equivalent to the fact that $H$ is conjugate to a real matrix.
\end{proof}

\noindent{\textbf{Proof of Theorem \ref{thm:conjtoneg}}}
\begin{proof}
 Only the `if' direction was proven in \cite{Z13}; we include a proof here for completeness. Suppose that the $supp(H)$ is bipartite. If $H$ is $n\times n$, let the vertices be labeled by $1,\dots, n$. Let $A=\{a_1,\dots,a_k\}\subset[n]$ denote one of the independent sets of $supp(H)$. Let $B$ denote the other. Then let $\Lambda_{ii}=-1$ for $i\in A$ and $1$ otherwise. Then $(\Lambda H\Lambda^\dagger)_{ij}=\Lambda_{ii}H_{ij}\Lambda_{jj}^{-1}$. We know that$H_{ij}$ is 0 unless exactly one of either $i$ or $j$ is in $A$ as its support is bipartite. Then $\Lambda_{ii}H_{ij}\Lambda_{jj}^{-1}=-H_{ij}$. So $\Lambda H\Lambda^\dagger=-H$.
 
 For the converse, suppose that $H$ is conjugate to $-H$, i.e. $\exists\Lambda$ such that $\Lambda H\Lambda^\dagger=-H$. We know from Theorem \ref{thm:invseparate} that this implies $w_c(H)=w_c(-H)$ for all cycles $c$. But $w_c(H)$ is a homogeneous monomial of degree $|c|$, the length of $c$. So $w_c(-H)=(-1)^{|c|}w_c(H)$, implying that $w_c(H)=0$ if $|c|$ is odd. Since $H$ can only have even cycles in its support, it is bipartite. 
 \end{proof}
 
 \noindent{\textbf{Proof of Proposition \ref{prop:adddiag}}}
\begin{proof}
If there is a $\Lambda\in U(1)^n$ such that $\Lambda H\Lambda^\dagger$ is real, then $\Lambda(H+D)\Lambda^\dagger=\Lambda H\Lambda^\dagger+D$ is also real and thus time-symmetric.
\end{proof}

\noindent{\textbf{Proof of Proposition \ref{prop:alphaind}}}
\begin{proof}
First of all, if the support of $H$ is a tree, the only invariants are trivial invariants which specifies the norm of each entry in $H$.  This means that $H$ can always be made real no matter the starting choice of $\alpha_{ij}$. Now suppose $H$ is not a tree and $e^{-itH}$ has a non-zero diagonal entry. Then there is a non-trivial cycle invariant $w(c)$ for a cycle $c:i_1\to i_2\to\cdots i_k\to i_1$, $k>2$. By changing the phase of $i_1$, we change the phase of $w(c)$ and thus we get a continuous family of quantum walks that cannot be related by the group generated by $K$ and the $\Lambda$-action. 
\end{proof}
\section*{Acknowledgements}

The research leading to these results has received funding from the European Research Council under the European Union's Seventh Framework Programme (FP7/2007-2013) / ERC grant agreement No 339109.


\section*{Additional information}

The authors declare no competing financial interests. 

\bibliography{chiral} 

\def\cprime{$'$}
\begin{thebibliography}{10}
\expandafter\ifx\csname url\endcsname\relax
  \def\url#1{\texttt{#1}}\fi
\expandafter\ifx\csname urlprefix\endcsname\relax\def\urlprefix{URL }\fi
\expandafter\ifx\csname doiprefix\endcsname\relax\def\doiprefix{DOI }\fi
\providecommand{\bibinfo}[2]{#2}
\providecommand{\eprint}[2][]{\url{#2}}

\bibitem{lim2016electrically}
\bibinfo{author}{Lim, H.-T.}, \bibinfo{author}{Togan, E.},
  \bibinfo{author}{Kroner, M.}, \bibinfo{author}{Miguel-Sanchez, J.} \&
  \bibinfo{author}{Imamoglu, A.}
\newblock \bibinfo{journal}{\bibinfo{title}{Electrically tunable artificial
  gauge potential for polaritons}}.
\newblock {\emph{\JournalTitle{Nature Communications}}}
  \textbf{\bibinfo{volume}{8}} (\bibinfo{year}{2017}).

\bibitem{mulken2011continuous}
\bibinfo{author}{M{\"u}lken, O.} \& \bibinfo{author}{Blumen, A.}
\newblock \bibinfo{journal}{\bibinfo{title}{Continuous-time quantum walks:
  Models for coherent transport on complex networks}}.
\newblock {\emph{\JournalTitle{Physics Reports}}}
  \textbf{\bibinfo{volume}{502}}, \bibinfo{pages}{37--87}
  (\bibinfo{year}{2011}).

\bibitem{mulken2006coherent}
\bibinfo{author}{M{\"u}lken, O.}, \bibinfo{author}{Bierbaum, V.} \&
  \bibinfo{author}{Blumen, A.}
\newblock \bibinfo{journal}{\bibinfo{title}{Coherent exciton transport in
  dendrimers and continuous-time quantum walks}}.
\newblock {\emph{\JournalTitle{The Journal of chemical physics}}}
  \textbf{\bibinfo{volume}{124}}, \bibinfo{pages}{124905}
  (\bibinfo{year}{2006}).

\bibitem{kempe2003quantum}
\bibinfo{author}{Kempe, J.}
\newblock \bibinfo{journal}{\bibinfo{title}{Quantum random walks: an
  introductory overview}}.
\newblock {\emph{\JournalTitle{Contemporary Physics}}}
  \textbf{\bibinfo{volume}{44}}, \bibinfo{pages}{307--327}
  (\bibinfo{year}{2003}).

\bibitem{venegas2012quantum}
\bibinfo{author}{Venegas-Andraca, S.~E.}
\newblock \bibinfo{journal}{\bibinfo{title}{Quantum walks: a comprehensive
  review}}.
\newblock {\emph{\JournalTitle{Quantum Information Processing}}}
  \textbf{\bibinfo{volume}{11}}, \bibinfo{pages}{1015--1106}
  (\bibinfo{year}{2012}).

\bibitem{wong2015faster}
\bibinfo{author}{Wong, T.~G.}
\newblock \bibinfo{journal}{\bibinfo{title}{Faster quantum walk search on a
  weighted graph}}.
\newblock {\emph{\JournalTitle{Physical Review A}}}
  \textbf{\bibinfo{volume}{92}}, \bibinfo{pages}{032320}
  (\bibinfo{year}{2015}).

\bibitem{wong2015grover}
\bibinfo{author}{Wong, T.~G.}
\newblock \bibinfo{journal}{\bibinfo{title}{Grover search with lackadaisical
  quantum walks}}.
\newblock {\emph{\JournalTitle{arXiv preprint arXiv:1502.04567}}}
  (\bibinfo{year}{2015}).

\bibitem{shenvi2003quantum}
\bibinfo{author}{Shenvi, N.}, \bibinfo{author}{Kempe, J.} \&
  \bibinfo{author}{Whaley, K.~B.}
\newblock \bibinfo{journal}{\bibinfo{title}{Quantum random-walk search
  algorithm}}.
\newblock {\emph{\JournalTitle{Physical Review A}}}
  \textbf{\bibinfo{volume}{67}}, \bibinfo{pages}{052307}
  (\bibinfo{year}{2003}).

\bibitem{krovi2015quantum}
\bibinfo{author}{Krovi, H.}, \bibinfo{author}{Magniez, F.},
  \bibinfo{author}{Ozols, M.} \& \bibinfo{author}{Roland, J.}
\newblock \bibinfo{journal}{\bibinfo{title}{Quantum walks can find a marked
  element on any graph}}.
\newblock {\emph{\JournalTitle{Algorithmica}}} \bibinfo{pages}{1--57}
  (\bibinfo{year}{2015}).

\bibitem{childs2004spatial}
\bibinfo{author}{Childs, A.~M.} \& \bibinfo{author}{Goldstone, J.}
\newblock \bibinfo{journal}{\bibinfo{title}{Spatial search by quantum walk}}.
\newblock {\emph{\JournalTitle{Physical Review A}}}
  \textbf{\bibinfo{volume}{70}}, \bibinfo{pages}{022314}
  (\bibinfo{year}{2004}).

\bibitem{moore2002quantum}
\bibinfo{author}{Moore, C.} \& \bibinfo{author}{Russell, A.}
\newblock \bibinfo{title}{Quantum walks on the hypercube}.
\newblock In \emph{\bibinfo{booktitle}{Randomization and Approximation
  Techniques in Computer Science}}, \bibinfo{pages}{164--178}
  (\bibinfo{publisher}{Springer}, \bibinfo{year}{2002}).

\bibitem{childs2003exponential}
\bibinfo{author}{Childs, A.~M.} \emph{et~al.}
\newblock \bibinfo{title}{Exponential algorithmic speedup by a quantum walk}.
\newblock In \emph{\bibinfo{booktitle}{Proceedings of the thirty-fifth annual
  ACM symposium on Theory of computing}}, \bibinfo{pages}{59--68}
  (\bibinfo{organization}{ACM}, \bibinfo{year}{2003}).

\bibitem{szegedy2004quantum}
\bibinfo{author}{Szegedy, M.}
\newblock \bibinfo{title}{Quantum speed-up of markov chain based algorithms}.
\newblock In \emph{\bibinfo{booktitle}{Foundations of Computer Science, 2004.
  Proceedings. 45th Annual IEEE Symposium on}}, \bibinfo{pages}{32--41}
  (\bibinfo{organization}{IEEE}, \bibinfo{year}{2004}).

\bibitem{childs2009universal}
\bibinfo{author}{Childs, A.~M.}
\newblock \bibinfo{journal}{\bibinfo{title}{Universal computation by quantum
  walk}}.
\newblock {\emph{\JournalTitle{Phys. Rev. Lett.}}}
  \textbf{\bibinfo{volume}{102}}, \bibinfo{pages}{180501}
  (\bibinfo{year}{2009}).
\newblock \doiprefix 10.1103/PhysRevLett.102.180501.

\bibitem{mohseni2008environment}
\bibinfo{author}{Mohseni, M.}, \bibinfo{author}{Rebentrost, P.},
  \bibinfo{author}{Lloyd, S.} \& \bibinfo{author}{Aspuru-Guzik, A.}
\newblock \bibinfo{journal}{\bibinfo{title}{Environment-assisted quantum walks
  in photosynthetic energy transfer}}.
\newblock {\emph{\JournalTitle{The Journal of chemical physics}}}
  \textbf{\bibinfo{volume}{129}}, \bibinfo{pages}{174106}
  (\bibinfo{year}{2008}).

\bibitem{rebentrost2009environment}
\bibinfo{author}{Rebentrost, P.}, \bibinfo{author}{Mohseni, M.},
  \bibinfo{author}{Kassal, I.}, \bibinfo{author}{Lloyd, S.} \&
  \bibinfo{author}{Aspuru-Guzik, A.}
\newblock \bibinfo{journal}{\bibinfo{title}{Environment-assisted quantum
  transport}}.
\newblock {\emph{\JournalTitle{New Journal of Physics}}}
  \textbf{\bibinfo{volume}{11}}, \bibinfo{pages}{033003}
  (\bibinfo{year}{2009}).

\bibitem{Z13}
\bibinfo{author}{Zimbor{\'a}s, Z.} \emph{et~al.}
\newblock \bibinfo{journal}{\bibinfo{title}{Quantum transport enhancement by
  time-reversal symmetry breaking}}.
\newblock {\emph{\JournalTitle{Sci. Rep.}}} \textbf{\bibinfo{volume}{3}},
  \bibinfo{pages}{2361} (\bibinfo{year}{2013}).
\newblock \doiprefix 10.1038/srep02361.
\newblock \eprint{1208.4049}.

\bibitem{xiang2013transfer}
\bibinfo{author}{Xiang, P.}, \bibinfo{author}{Litinskaya, M.},
  \bibinfo{author}{Shapiro, E.~A.} \& \bibinfo{author}{Krems, R.~V.}
\newblock \bibinfo{journal}{\bibinfo{title}{Non-adiabatic control of quantum
  energy transfer in ordered and disordered arrays}}.
\newblock {\emph{\JournalTitle{New J. Phys.}}} \textbf{\bibinfo{volume}{15}},
  \bibinfo{pages}{063015} (\bibinfo{year}{2013}).
\newblock \doiprefix 10.1088/1367-2630/15/6/063015.

\bibitem{bedkihal2013transfer}
\bibinfo{author}{Bedkihal, S.}, \bibinfo{author}{Bandyopadhyay, M.} \&
  \bibinfo{author}{Segal, D.}
\newblock \bibinfo{journal}{\bibinfo{title}{The probe technique far from
  equilibrium: Magnetic field symmetries of nonlinear transport}}.
\newblock {\emph{\JournalTitle{Eur. Phys. J. B}}}
  \textbf{\bibinfo{volume}{86}}, \bibinfo{pages}{1--18} (\bibinfo{year}{2013}).
\newblock \doiprefix 10.1140/epjb/e2013-40971-7.

\bibitem{manzano2013transfer}
\bibinfo{author}{Manzano, D.} \& \bibinfo{author}{Hurtado, P.~I.}
\newblock \bibinfo{journal}{\bibinfo{title}{Symmetry and the thermodynamics of
  currents in open quantum systems}}.
\newblock {\emph{\JournalTitle{Phys. Rev. B}}} \textbf{\bibinfo{volume}{90}},
  \bibinfo{pages}{125138} (\bibinfo{year}{2014}).
\newblock \urlprefix\url{http://link.aps.org/doi/10.1103/PhysRevB.90.125138}.
\newblock \doiprefix 10.1103/PhysRevB.90.125138.

\bibitem{Wong15}
\bibinfo{author}{Wong, T.~G.}
\newblock \bibinfo{journal}{\bibinfo{title}{Quantum walk search with
  time-reversal symmetry breaking}}.
\newblock {\emph{\JournalTitle{J. Phys. A: Math. Theor.}}}
  \textbf{\bibinfo{volume}{48}}, \bibinfo{pages}{405303}
  (\bibinfo{year}{2015}).

\bibitem{cameron2013transfer}
\bibinfo{author}{Cameron, S.} \emph{et~al.}
\newblock \bibinfo{journal}{\bibinfo{title}{Universal state transfer on
  graphs}}.
\newblock {\emph{\JournalTitle{Linear Algebra Appl.}}}
  \textbf{\bibinfo{volume}{455}}, \bibinfo{pages}{115--142}
  (\bibinfo{year}{2014}).
\newblock \doiprefix 10.1016/j.laa.2014.05.004.

\bibitem{oreshkov2015operational}
\bibinfo{author}{Oreshkov, O.} \& \bibinfo{author}{Cerf, N.~J.}
\newblock \bibinfo{journal}{\bibinfo{title}{Operational formulation of time
  reversal in quantum theory}}.
\newblock {\emph{\JournalTitle{Nature Physics}}}  (\bibinfo{year}{2015}).

\bibitem{Hofstadter1976energy}
\bibinfo{author}{Hofstadter, D.~R.}
\newblock \bibinfo{journal}{\bibinfo{title}{Energy levels and wave functions of
  {B}loch electrons in rational and irrational magnetic fields}}.
\newblock {\emph{\JournalTitle{Phys. Rev. B}}} \textbf{\bibinfo{volume}{14}},
  \bibinfo{pages}{2239--2249} (\bibinfo{year}{1976}).
\newblock \doiprefix 10.1103/PhysRevB.14.2239.

\bibitem{wigner59}
\bibinfo{author}{Wigner, E.~P.}
\newblock \emph{\bibinfo{title}{Group Theory and its Application to the Quantum
  Mechanics of Atomic Spectra}} (\bibinfo{publisher}{New York: Academic Press},
  \bibinfo{year}{1959}).

\bibitem{2013PhRvL.111b0504S}
\bibinfo{author}{Skotiniotis, M.}, \bibinfo{author}{Toloui, B.},
  \bibinfo{author}{Durham, I.~T.} \& \bibinfo{author}{Sanders, B.~C.}
\newblock \bibinfo{journal}{\bibinfo{title}{Quantum frameness for cpt
  symmetry}}.
\newblock {\emph{\JournalTitle{Phys. Rev. Lett.}}}
  \textbf{\bibinfo{volume}{111}}, \bibinfo{pages}{020504}
  (\bibinfo{year}{2013}).
\newblock \doiprefix 10.1103/PhysRevLett.111.020504.
\newblock \eprint{1306.6114}.

\bibitem{peierls1993}
\bibinfo{author}{Peierls, R.}
\newblock \bibinfo{journal}{\bibinfo{title}{On the theory of diamagnetism of
  conduction electrons}}.
\newblock {\emph{\JournalTitle{Z. Phys.}}} \textbf{\bibinfo{volume}{80}},
  \bibinfo{pages}{763--791} (\bibinfo{year}{1933}).

\bibitem{sarma2008hall}
\bibinfo{author}{Sarma, S.~D.} \& \bibinfo{author}{Pinczuk, A.}
\newblock \emph{\bibinfo{title}{Perspectives in quantum Hall effects}}
  (\bibinfo{publisher}{John Wiley \& Sons}, \bibinfo{year}{2008}).

\bibitem{hasan2010topological}
\bibinfo{author}{Hasan, M.~Z.} \& \bibinfo{author}{Kane, C.~L.}
\newblock \bibinfo{journal}{\bibinfo{title}{Colloquium: {T}opological
  insulators}}.
\newblock {\emph{\JournalTitle{Rev. Mod. Phys.}}}
  \textbf{\bibinfo{volume}{82}}, \bibinfo{pages}{3045--3067}
  (\bibinfo{year}{2010}).
\newblock \doiprefix 10.1103/RevModPhys.82.3045.

\bibitem{dalibard2011artificial}
\bibinfo{author}{Dalibard, J.}, \bibinfo{author}{Gerbier, F.},
  \bibinfo{author}{Juzeli{\=u}nas, G.} \& \bibinfo{author}{{\"O}hberg, P.}
\newblock \bibinfo{journal}{\bibinfo{title}{Colloquium: {A}rtificial gauge
  potentials for neutral atoms}}.
\newblock {\emph{\JournalTitle{Rev. Mod. Phys.}}}
  \textbf{\bibinfo{volume}{83}}, \bibinfo{pages}{1523--1543}
  (\bibinfo{year}{2011}).
\newblock \doiprefix 10.1103/RevModPhys.83.1523.

\bibitem{plenio2008dephasing}
\bibinfo{author}{Plenio, M.~B.} \& \bibinfo{author}{Huelga, S.~F.}
\newblock \bibinfo{journal}{\bibinfo{title}{Dephasing-assisted transport:
  quantum networks and biomolecules}}.
\newblock {\emph{\JournalTitle{New J. Phys.}}} \textbf{\bibinfo{volume}{10}},
  \bibinfo{pages}{113019} (\bibinfo{year}{2008}).
\newblock \urlprefix\url{http://stacks.iop.org/1367-2630/10/i=11/a=113019}.
\newblock \doiprefix 10.1088/1367-2630/10/11/113019.

\bibitem{2012PhRvL.108b0602S}
\bibinfo{author}{Sinayskiy, I.}, \bibinfo{author}{Marais, A.},
  \bibinfo{author}{Petruccione, F.} \& \bibinfo{author}{Ekert, A.}
\newblock \bibinfo{journal}{\bibinfo{title}{Decoherence-assisted transport in a
  dimer system}}.
\newblock {\emph{\JournalTitle{Phys. Rev. Lett.}}}
  \textbf{\bibinfo{volume}{108}}, \bibinfo{pages}{020602}
  (\bibinfo{year}{2012}).
\newblock \doiprefix 10.1103/PhysRevLett.108.020602.
\newblock \eprint{1401.3298}.

\bibitem{Bose07}
\bibinfo{author}{Bose, S.}
\newblock \bibinfo{journal}{\bibinfo{title}{Quantum communication through spin
  chain dynamics: an introductory overview}}.
\newblock {\emph{\JournalTitle{Contemporary Physics}}}
  \textbf{\bibinfo{volume}{48}}, \bibinfo{pages}{13--30}
  (\bibinfo{year}{2007}).
\newblock \doiprefix 10.1080/00107510701342313.

\bibitem{aharonov1959significance}
\bibinfo{author}{Aharonov, Y.} \& \bibinfo{author}{Bohm, D.}
\newblock \bibinfo{journal}{\bibinfo{title}{Significance of electromagnetic
  potentials in the quantum theory}}.
\newblock {\emph{\JournalTitle{Physical Review}}}
  \textbf{\bibinfo{volume}{115}}, \bibinfo{pages}{485} (\bibinfo{year}{1959}).

\bibitem{harrison2011quantum}
\bibinfo{author}{Harrison, J.}, \bibinfo{author}{Keating, J.} \&
  \bibinfo{author}{Robbins, J.}
\newblock \bibinfo{title}{Quantum statistics on graphs}.
\newblock In \emph{\bibinfo{booktitle}{Proceedings of the Royal Society of
  London A: Mathematical, Physical and Engineering Sciences}}, vol.
  \bibinfo{volume}{467}, \bibinfo{pages}{212--233} (\bibinfo{organization}{The
  Royal Society}, \bibinfo{year}{2011}).

\bibitem{mulken2011ctqw}
\bibinfo{author}{M{\"u}lken, O.} \& \bibinfo{author}{Blumen, A.}
\newblock \bibinfo{journal}{\bibinfo{title}{Continuous-time quantum walks:
  {M}odels for coherent transport on complex networks}}.
\newblock {\emph{\JournalTitle{Phys. Rep.}}} \textbf{\bibinfo{volume}{502}},
  \bibinfo{pages}{37--87} (\bibinfo{year}{2011}).
\newblock \doiprefix 10.1016/j.physrep.2011.01.002.

\bibitem{PhysRevX.3.041007}
\bibinfo{author}{Faccin, M.}, \bibinfo{author}{Johnson, T.},
  \bibinfo{author}{Biamonte, J.~D.}, \bibinfo{author}{Kais, S.} \&
  \bibinfo{author}{Migda\l{}, P.}
\newblock \bibinfo{journal}{\bibinfo{title}{Degree distribution in quantum
  walks on complex networks}}.
\newblock {\emph{\JournalTitle{Phys. Rev. X}}} \textbf{\bibinfo{volume}{3}},
  \bibinfo{pages}{041007} (\bibinfo{year}{2013}).
\newblock \urlprefix\url{http://link.aps.org/doi/10.1103/PhysRevX.3.041007}.
\newblock \doiprefix 10.1103/PhysRevX.3.041007.

\bibitem{Perseguers2010}
\bibinfo{author}{Perseguers, S.}, \bibinfo{author}{Lewenstein, M.},
  \bibinfo{author}{Ac{\'i}n, A.} \& \bibinfo{author}{Cirac, J.~I.}
\newblock \bibinfo{journal}{\bibinfo{title}{Quantum random networks}}.
\newblock {\emph{\JournalTitle{Nature Physics}}} \textbf{\bibinfo{volume}{6}},
  \bibinfo{pages}{539--543} (\bibinfo{year}{2010}).
\newblock \doiprefix 10.1038/nphys1665.

\bibitem{davies2010embeddable}
\bibinfo{author}{Davies, E.}
\newblock \bibinfo{journal}{\bibinfo{title}{Embeddable markov matrices}}.
\newblock {\emph{\JournalTitle{Electronic Journal of Probability}}}
  \textbf{\bibinfo{volume}{15}}, \bibinfo{pages}{1474--1486}
  (\bibinfo{year}{2010}).

\bibitem{cabibbo1963unitary}
\bibinfo{author}{Cabibbo, N.}
\newblock \bibinfo{journal}{\bibinfo{title}{Unitary symmetry and leptonic
  decays}}.
\newblock {\emph{\JournalTitle{Physical Review Letters}}}
  \textbf{\bibinfo{volume}{10}}, \bibinfo{pages}{531} (\bibinfo{year}{1963}).

\bibitem{kobayashi1973cp}
\bibinfo{author}{Kobayashi, M.} \& \bibinfo{author}{Maskawa, T.}
\newblock \bibinfo{journal}{\bibinfo{title}{{CP}-violation in the
  renormalizable theory of weak interaction}}.
\newblock {\emph{\JournalTitle{Progress of Theoretical Physics}}}
  \textbf{\bibinfo{volume}{49}}, \bibinfo{pages}{652--657}
  (\bibinfo{year}{1973}).

\bibitem{jarlskog1985commutator}
\bibinfo{author}{Jarlskog, C.}
\newblock \bibinfo{journal}{\bibinfo{title}{Commutator of the quark mass
  matrices in the standard electroweak model and a measure of maximal {CP}
  nonconservation}}.
\newblock {\emph{\JournalTitle{Physical Review Letters}}}
  \textbf{\bibinfo{volume}{55}}, \bibinfo{pages}{1039} (\bibinfo{year}{1985}).

\bibitem{dita2005global}
\bibinfo{author}{Dita, P.}
\newblock \bibinfo{journal}{\bibinfo{title}{Global fits to the
  {C}abibbo--{K}obayashi--{M}askawa matrix: unitarity condition method versus
  standard unitarity triangles approach}}.
\newblock {\emph{\JournalTitle{Modern Physics Letters A}}}
  \textbf{\bibinfo{volume}{20}}, \bibinfo{pages}{1709--1721}
  (\bibinfo{year}{2005}).

\bibitem{maki1962remarks}
\bibinfo{author}{Maki, Z.}, \bibinfo{author}{Nakagawa, M.} \&
  \bibinfo{author}{Sakata, S.}
\newblock \bibinfo{journal}{\bibinfo{title}{Remarks on the unified model of
  elementary particles}}.
\newblock {\emph{\JournalTitle{Progress of Theoretical Physics}}}
  \textbf{\bibinfo{volume}{28}}, \bibinfo{pages}{870--880}
  (\bibinfo{year}{1962}).

\bibitem{mennessier1974some}
\bibinfo{author}{Mennessier, G.} \& \bibinfo{author}{Nuyts, J.}
\newblock \bibinfo{journal}{\bibinfo{title}{Some unitarity bounds for finite
  matrices}}.
\newblock {\emph{\JournalTitle{Journal of Mathematical Physics}}}
  \textbf{\bibinfo{volume}{15}}, \bibinfo{pages}{1525--1537}
  (\bibinfo{year}{1974}).

\bibitem{werner2001all}
\bibinfo{author}{Werner, R.~F.}
\newblock \bibinfo{journal}{\bibinfo{title}{All teleportation and dense coding
  schemes}}.
\newblock {\emph{\JournalTitle{Journal of Physics A: Mathematical and
  General}}} \textbf{\bibinfo{volume}{34}}, \bibinfo{pages}{7081}
  (\bibinfo{year}{2001}).

\bibitem{kraft2000classical}
\bibinfo{author}{Kraft, H.} \& \bibinfo{author}{Procesi, C.}
\newblock \bibinfo{journal}{\bibinfo{title}{Classical invariant theory, a
  primer}}.
\newblock {\emph{\JournalTitle{Lecture Notes, Version}}}
  (\bibinfo{year}{2000}).

\bibitem{kempf1978instability}
\bibinfo{author}{Kempf, G.~R.}
\newblock \bibinfo{journal}{\bibinfo{title}{Instability in invariant theory}}.
\newblock {\emph{\JournalTitle{Annals of Mathematics}}}
  \bibinfo{pages}{299--316} (\bibinfo{year}{1978}).

\bibitem{MR1304906}
\bibinfo{author}{Mumford, D.}, \bibinfo{author}{Fogarty, J.} \&
  \bibinfo{author}{Kirwan, F.}
\newblock \emph{\bibinfo{title}{Geometric invariant theory}},
  vol.~\bibinfo{volume}{34} of \emph{\bibinfo{series}{Ergebnisse der Mathematik
  und ihrer Grenzgebiete (2) [Results in Mathematics and Related Areas (2)]}}
  (\bibinfo{publisher}{Springer-Verlag}, \bibinfo{address}{Berlin},
  \bibinfo{year}{1994}), \bibinfo{edition}{third} edn.

\end{thebibliography}

\end{document}